\documentclass[
 aip,
 jcp,
 amsmath,amssymb,
 reprint,
]{revtex4-1}

\usepackage{graphicx}
\usepackage{dcolumn}
\usepackage{bm}
\usepackage[utf8]{inputenc}
\usepackage[T1]{fontenc}
\usepackage{mathptmx}
\usepackage{etoolbox}
\usepackage{mathtools}
\usepackage{mathrsfs}
\usepackage{amsthm}
\usepackage{physics}
\usepackage{xcolor}
\usepackage{enumitem}
\usepackage{microtype}
\usepackage[hidelinks]{hyperref}

\allowdisplaybreaks

\makeatletter
\def\@email#1#2{%
 \endgroup
 \patchcmd{\titleblock@produce}
  {\frontmatter@RRAPformat}
  {\frontmatter@RRAPformat{\produce@RRAP{*#1\href{mailto:#2}{#2}}}\frontmatter@RRAPformat}
  {}{}
}%
\makeatother

\newcommand{\Trf}{\operatorname{Tr}}

\newcommand{\TrF}{\operatorname{Tr}_{\mathcal F}}

\newcommand{\dr}{\,d\mathbf r}

\newcommand{\Sstate}{\mathcal S}
\newcommand{\Aobs}{\mathcal A}
\newcommand{\Bobs}{\mathcal B}
\newcommand{\Mmap}{M_{\mathcal B}}
\newcommand{\Mspace}{\mathfrak M_{\mathcal B}}
\newcommand{\Fib}{\mathcal F}
\newcommand{\Rec}{\mathcal R}
\newcommand{\Oset}{\mathcal O_{\hat O}}
\newcommand{\Pdata}{\mathcal P}
\newcommand{\Qprobe}{\mathcal Q_{\Pdata}}
\newcommand{\Fullread}{\Omega_{\Pdata}}
\newcommand{\Eenc}{E_{\Pdata}}

\newcommand{\MA}{\mathfrak M_A}
\newcommand{\MAstar}{\mathfrak M_A^\ast}
\newcommand{\mg}{m_A^{\mathrm g}}
\newcommand{\ml}{m_A^{\mathrm l}}
\newcommand{\jg}{j_A^{\mathrm g}}
\newcommand{\jl}{j_A^{\mathrm l}}

\begin{document}

\title{Full-State and Reduced-Moment Encodings: A Representation-Level View of Equilibrium Quantum Many-Body Theory}

\author{Nan Sheng}
\affiliation{
Institute for Computational and Mathematical Engineering (ICME),
Stanford University, Stanford, CA 94305, USA.
}
\email{nansheng@stanford.edu}

\date{\today}

\begin{abstract}
Equilibrium quantum many-body methods differ not only in approximation, but in
which information they represent explicitly.  We formulate this distinction by
fixing an equilibrium specification and viewing every representation
as an encoder from admissible states to represented variables.  The identity
encoder gives a full-state representation, whereas a non-injective encoder
gives a reduced representation whose value labels a fiber of compatible
states.  For a specified task, an exact decoder exists on a state class if and
only if the task is constant on the encoder fibers within that class.
Variational principles, reconstruction correspondences, functionals, kernels,
and closures are different realizations of additional structure used to select,
restrict, or approximate the task-relevant content of a fiber when the retained
variable alone is insufficient.  Static
moments and imaginary-time correlation functions are unified as restrictions
of a complete equilibrium readout functional to different probe families.
Within the same principle, quantum embedding can be viewed as consistency or
replacement between global and local descriptions through reduced interface
encoders and their conjugate fields.
\end{abstract}

\newtheorem{definition}{Definition}
\newtheorem{observation}{Observation}
\newtheorem{remark}{Remark}
\newtheorem{example}{Example}

\maketitle

\section{Introduction}

Equilibrium quantum many-body theory is, in part, a problem of representation.
The central question is not only how a many-body problem is approximated, but
which information is made explicit and which dependence is delegated to a
functional, reconstruction rule, or closure.

The discussion is organized by a single encoder--fiber--decoder principle.
Fix an equilibrium specification \(\Pdata\), containing the kinematic,
thermodynamic, and dynamical data needed to define the problem and its
selected equilibrium readouts.  Depending on the application, \(\Pdata\) may
include the Hilbert or Fock-space sector, particle-number convention, inverse
temperature, Hamiltonian or thermal generator, and prescribed sources.  It may
be fixed for one problem or treated parametrically across a family of problems.
The associated set \(\Sstate_{\Pdata}\) is understood as an admissible trial-state
class for representation or variational search; it need not coincide with the
set of physical equilibrium solutions for a single fixed source.  Let
\begin{equation}
    \Eenc:\Sstate_{\Pdata}\to\mathcal X,
    \qquad
    \Gamma\mapsto x
\end{equation}
be an encoder.  Its value \(x\) is the variable retained by the representation,
and its fiber is
\begin{equation}
    \Eenc^{-1}(x)
    =
    \{\Gamma\in\Sstate_{\Pdata}:\Eenc(\Gamma)=x\}.
\end{equation}
A full-state representation is the identity encoder
\(E_{\Pdata}^{\rm full}(\Gamma)=\Gamma\), whose fibers are singletons.  A reduced
representation uses a generally non-injective encoder \(M_{\Pdata}\), whose
value \(m\) labels a set of compatible full states.

Let \(T_{\Pdata}:\Sstate_{\Pdata}\to\mathcal Y\) be a specified task.  A decoder
\(D_{T,\Pdata}\) is exact on a state class \(\mathcal C\subseteq\Sstate_{\Pdata}\)
when
\begin{equation}
    T_{\Pdata}|_{\mathcal C}
    =
    D_{T,\Pdata}\circ E_{\Pdata}|_{\mathcal C}.
\end{equation}
Such a decoder exists if and only if \(T_{\Pdata}\) is constant on every
intersection
\begin{equation}
    \mathcal C\cap E_{\Pdata}^{-1}(x).
\end{equation}
This criterion is the governing principle of the paper.  It separates three
questions that are often merged: what is encoded, which full states remain
indistinguishable after encoding, and which task can be decoded from the
retained variable.  When a task is already constant on the fibers, the reduced
variable alone is sufficient.  When it is not, no single-valued function of
that variable can reproduce the task on the entire state class.  An operational reduced theory
must then change the effective decoding problem by narrowing, weighting,
selecting, or approximating the task-relevant part of the fiber, for example
through an equilibrium variational principle, reconstruction correspondence,
representability condition, functional, kernel, or closure.  Such added
structure should not be confused with exact recovery of arbitrary task values
on the original unrestricted fiber.

Static and dynamical equilibrium variables fit the same principle.  Relative to
\(\Pdata\), let \(\Qprobe\) denote a chosen space of equilibrium probes.  It may
contain ordinary static operators and also \(\Pdata\)-dependent time-ordered
operator products.  A full equilibrium state defines the complete readout
functional
\begin{equation}
    \Fullread(\Gamma):
    Q\longmapsto \TrF(\Gamma Q),
    \qquad
    Q\in\Qprobe.
\end{equation}
For a static probe, \(Q=A\).  For an imaginary-time two-point probe,
\begin{equation}
    Q=T_\tau A(\tau)B(0),
    \qquad
    A(\tau)=e^{\tau K_{\Pdata}}Ae^{-\tau K_{\Pdata}}.
\end{equation}
A density, reduced density matrix, Matsubara Green's function, or response
function is obtained by restricting this same complete readout to a selected
probe family.  Function-valued outputs may be regarded as indexed collections
of such probes, for example one probe for each imaginary time \(\tau\).  Thus
the common primitive is not a static expectation value alone, but a
\(\Pdata\)-relative equilibrium readout functional and its restrictions.

At finite temperature, the equilibrium state is itself selected by a
variational principle.  For a physical thermal generator \(K\),
\begin{equation}
    \Gamma_\beta
    =
    \arg\min_{\substack{\Gamma\geq0\\\TrF\Gamma=1}}
    \left\{
    \TrF(\Gamma K)
    +
    \beta^{-1}\TrF(\Gamma\log\Gamma)
    \right\}
    =
    Z^{-1}e^{-\beta K}.
\end{equation}
Hence, if the exact full-rank optimizer is known at inverse temperature
\(\beta\), then
\begin{equation}
    K=-\beta^{-1}\log\Gamma_\beta+cI.
\end{equation}
The unknown scalar does not affect imaginary-time evolution, so the optimal
full state contains the generator needed for equilibrium readouts without
requiring it as an additional represented variable.  For a nonoptimal
faithful trial state, the same logarithm defines a generator associated with
that state; it coincides with the physical generator only at the exact
variational optimum, up to an additive scalar.  The explicit
\(\Pdata\)-dependence remains useful during the variational search and
because it also covers zero-temperature states and
settings in which the generator is part of the problem specification.

This unified viewpoint emphasizes two broad representation-level patterns.
Full-state methods retain a state-level object and approximate, parametrize, or
sample the state space.  Reduced-variable methods retain a non-injective image
and place additional task dependence in a decoder or reconstruction structure.
Approximation enters when the state class, representable image, decoder, or
matching condition is approximated; reduction by itself is not an
approximation.

Quantum embedding is governed by the same principle.  A global description and
a local description need not exchange a full state.  Each side can be encoded
onto an overlapping interface, and the embedding condition matches, replaces,
or makes consistent the corresponding reduced variables and conjugate
effective fields.  Within this representation-level taxonomy, embedding is
therefore best viewed not as a third storage strategy, but as a coupling
architecture built from encoders and decoders.

Several ingredients of this picture are classical.  The constrained-search and
convex formulations of density-functional theory, finite-temperature
density-functional theory, reduced-density-matrix functional theory,
density-potential mappings, and related generalized functional theories have
been studied extensively.\cite{HohenbergKohn1964,KohnSham1965,Mermin1965,Levy1979,Lieb1983,Gilbert1975,Coleman1963,Penz2023MappingI,Penz2023MappingII,Penz2023Degeneracy,ShengExactDFT}
The contribution of this note is not a new functional construction.  It is a
representation principle that places full-state methods, reduced variables,
equilibrium functionals, dual fields, and quantum embedding in one diagram.
Its intended use is diagnostic: it identifies the state distinctions erased by
an encoder, the tasks that remain exactly decodable, and the additional
assumptions introduced when a theory selects or approximates within a fiber:
\begin{equation}
    \Gamma
    \xrightarrow{\;E_{\Pdata}\;}
    x,
    \qquad
    x
    \xrightarrow{\;D_{T,\Pdata}\;}
    T_{\Pdata}(\Gamma),
\end{equation}
with exact decoding controlled by the fibers of \(E_{\Pdata}\).

\section{Full equilibrium readouts and full-state encodings}

Let \(\Sstate_{\Pdata}\) denote an admissible state space used for
representation and variational search, rather than only the set of states
realized as equilibrium solutions for one external source.  In the simplest
setting it is a convex set of density matrices on Fock space,
\begin{equation}
    \Gamma\geq 0,
    \qquad
    \TrF\Gamma=1.
\end{equation}
Pure states, zero-temperature ensembles, canonical finite-temperature states,
and grand-canonical equilibrium states arise in appropriate choices of
\(\Pdata\) and of the admissible class.  Physical equilibrium states are
selected from this class after the Hamiltonian, source convention, and
equilibrium principle have been specified.

The probe space \(\Qprobe\) collects the readouts considered in the theory.
Its static sector contains an observable algebra \(\Aobs\).  Its dynamical
sector may contain time-ordered products generated by \(K_{\Pdata}\).  We use
this construction at a formal level and assume throughout that the relevant
products, traces, and pairings are well defined on the chosen state and probe
classes.  A state \(\Gamma\) defines the complete equilibrium readout functional
\begin{equation}
    \Fullread(\Gamma)(Q)
    =
    \TrF(\Gamma Q),
    \qquad
    Q\in\Qprobe.
\end{equation}
On the static sector this reduces to the usual expectation-value functional
\begin{equation}
    \omega_\Gamma(A)
    =
    \TrF(\Gamma A),
    \qquad
    A\in\Aobs,
\end{equation}
which is normalized and positive:
\begin{equation}
    \omega_\Gamma(I)=1,
    \qquad
    \omega_\Gamma(A^\dagger A)\geq0.
\end{equation}
On a two-point imaginary-time probe it gives
\begin{equation}
    \Fullread(\Gamma)
    \!\left(T_\tau A(\tau)B(0)\right)
    =
    \TrF\!\left[
    \Gamma T_\tau A(\tau)B(0)
    \right].
\end{equation}

A full-state representation uses the identity encoder
\begin{equation}
    E_{\Pdata}^{\rm full}:\Gamma\mapsto\Gamma.
\end{equation}
Because its fibers are singletons, every task defined on the state class
factors through this encoder.  Equivalently, once \(\Pdata\) and a sufficiently
accurate \(\Gamma\) are known, all selected static and equilibrium dynamical
readouts are obtained by direct evaluation of \(\Fullread(\Gamma)\).  At finite
temperature, the exact faithful optimizer also determines its physical thermal
generator up to an additive scalar through the logarithmic relation above.

A full-state method attempts to represent \(\Gamma\) itself, or a
state-generating parametrization from which the relevant state-level readouts
can be evaluated, and commonly restricts it to a tractable class
\begin{equation}
    \Sstate_{\rm ansatz}\subset\Sstate_{\Pdata}.
\end{equation}
For a variational energy problem,
\begin{equation}
    E_{\rm ansatz}
    =
    \inf_{\Gamma_\theta\in\Sstate_{\rm ansatz}}
    \TrF(\Gamma_\theta H),
\end{equation}
or, for a pure-state ansatz,
\begin{equation}
    E_{\rm ansatz}
    =
    \inf_\theta
    \frac{\langle\Psi_\theta|H|\Psi_\theta\rangle}
    {\langle\Psi_\theta|\Psi_\theta\rangle}.
\end{equation}
Projective, stochastic, and fitting formulations need not use this variational
form, but they share the representation-level feature that a state-level object
is primary and reduced readouts are evaluated afterward.

\section{Reduced-moment encoders and fibers}

A reduced representation retains only selected components of the complete
equilibrium readout.  Let \(\Bobs\subseteq\Qprobe\) be a chosen probe subspace.
It may contain static one-body operators, higher-body operators, or
imaginary-time probe operators.  The associated reduced-moment encoder is the
restriction
\begin{equation}
    \Mmap^{\Pdata}
    =
    \operatorname{res}_{\Bobs}\circ\Fullread:
    \Sstate_{\Pdata}\to\Bobs^\ast.
\end{equation}
Explicitly,
\begin{equation}
    m_\Gamma(Q)
    =
    \TrF(\Gamma Q),
    \qquad
    Q\in\Bobs.
\end{equation}
The superscript \(\Pdata\) will usually be suppressed when the equilibrium
specification is fixed.

This definition treats static and Matsubara variables uniformly.  If
\(\Bobs\subseteq\Aobs\), then \(m_\Gamma\) is an equal-time moment, such as a
density or reduced density matrix.  If \(\Bobs\) contains probes
\(T_\tau c_i(\tau)c_j^\dagger(0)\), then the same restriction produces the
Matsubara Green's function, with the dependence on \(K_{\Pdata}\) already
contained in the probe space.  Response functions are handled by choosing the
corresponding equilibrium probe family or, equivalently, by a source-functional
realization of the same encoder.

The representable moment space is the image
\begin{equation}
    \Mspace^{\Pdata}
    =
    \Mmap^{\Pdata}(\Sstate_{\Pdata}),
\end{equation}
and the fiber over \(m\in\Mspace^{\Pdata}\) is
\begin{equation}
    \Fib_m^{\Pdata}
    =
    \left(\Mmap^{\Pdata}\right)^{-1}(m).
\end{equation}
All states in this fiber are indistinguishable by the retained probe family.

\begin{observation}[Moment fibers and state-space equivalence]
For fixed \(\Pdata\), the encoder \(\Mmap^{\Pdata}\) defines an equivalence
relation by
\begin{equation}
    \Gamma_1\sim_{\mathcal B,\Pdata}\Gamma_2
    \quad\Longleftrightarrow\quad
    \Mmap^{\Pdata}(\Gamma_1)
    =
    \Mmap^{\Pdata}(\Gamma_2).
\end{equation}
Its equivalence classes are exactly the fibers
\(\Fib_m^{\Pdata}\), labeled by the representable moment space
\(\Mspace^{\Pdata}\).
\end{observation}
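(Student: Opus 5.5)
This is the standard kernel-of-a-map argument, and I would write it in three steps.

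\emph{Step 1: the relation is an equivalence relation.} The relation \(\sim_{\mathcal B,\Pdata}\) is the pullback of equality on \(\Bobs^\ast\) along the map \(\Mmap^{\Pdata}:\Sstate_{\Pdata}\to\Bobs^\ast\). Reflexivity, symmetry and transitivity of equality in \(\Bobs^\ast\) therefore transfer directly. Concretely, \(\Mmap^{\Pdata}(\Gamma)=\Mmap^{\Pdata}(\Gamma)\). Equality of the images is symmetric. If \(\Mmap^{\Pdata}(\Gamma_1)=\Mmap^{\Pdata}(\Gamma_2)\) and \(\Mmap^{\Pdata}(\Gamma_2)=\Mmap^{\Pdata}(\Gamma_3)\), then \(\Mmap^{\Pdata}(\Gamma_1)=\Mmap^{\Pdata}(\Gamma_3)\). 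The one thing needed is that \(\Mmap^{\Pdata}\) is a single-valued map. This holds under the standing assumption that the traces \(\TrF(\Gamma Q)\), \(Q\in\Bobs\), are well defined on the chosen state and probe classes, so each \(\Gamma\) determines a unique element \(m_\Gamma\in\Bobs^\ast\). Two elements of \(\Bobs^\ast\) are equal exactly when they agree on every \(Q\in\Bobs\).

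\emph{Step 2: the equivalence classes are the fibers.} Fix \(\Gamma\in\Sstate_{\Pdata}\) and set \(m=\Mmap^{\Pdata}(\Gamma)\), so that \(m\in\Mspace^{\Pdata}\) by definition of the image. Its class is
\begin{equation}
[\Gamma]=\{\Gamma'\in\Sstate_{\Pdata}:\Mmap^{\Pdata}(\Gamma')=m\}=\Fib_m^{\Pdata}.
\end{equation}
Thus every class is a fiber. Conversely, for any \(m\in\Mspace^{\Pdata}\) there is some \(\Gamma\) with \(\Mmap^{\Pdata}(\Gamma)=m\), and then \(\Fib_m^{\Pdata}=[\Gamma]\). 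In particular every such fiber is nonempty, which is exactly why the labels are restricted to the image rather than all of \(\Bobs^\ast\).

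\emph{Step 3: the labeling is a bijection.} The map \([\Gamma]\mapsto\Mmap^{\Pdata}(\Gamma)\) from the quotient \(\Sstate_{\Pdata}/\!\sim_{\mathcal B,\Pdata}\) to \(\Mspace^{\Pdata}\) is well defined by the definition of the relation. It is injective because distinct values of \(m\) give disjoint fibers, and it is surjective by construction of the image.

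No step is a genuine obstacle. The only point needing care is well-definedness: for imaginary-time probes, \(\Bobs\) depends on \(K_{\Pdata}\). Fixing \(\Pdata\) fixes both the probe family and the state class, so the encoder and hence the relation are unambiguous. If \(\Pdata\) varied, the relation would have to be defined separately for each \(\Pdata\).
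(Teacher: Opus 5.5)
Your proof is correct and is the standard kernel-of-a-map argument that the paper leaves implicit: the paper states this Observation without proof, treating it as immediate from the definitions of \(\Mmap^{\Pdata}\), its image \(\Mspace^{\Pdata}\), and the fibers \(\Fib_m^{\Pdata}\). One cosmetic point in Step~3: disjointness of fibers over distinct \(m\) literally shows that \(m\mapsto\Fib_m^{\Pdata}\) is injective (which, together with Step~2, still gives the bijection), whereas injectivity of \([\Gamma]\mapsto\Mmap^{\Pdata}(\Gamma)\) itself is just the backward implication in the definition of \(\sim_{\mathcal B,\Pdata}\).
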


Thus a reduced moment is not a smaller state.  It is the image of a state under
a selected equilibrium readout encoder and therefore labels an equivalence
class of states relative to \(\Pdata\) and \(\Bobs\).

\section{Equilibrium decoders as reconstruction correspondences}

The general criterion stated in the Introduction now specializes to the
reduced-moment encoder.  Let
\(T_{\Pdata}:\Sstate_{\Pdata}\to\mathcal Y\) be a task.

\begin{observation}[Exact task decoding]
For a state class \(\mathcal C\subseteq\Sstate_{\Pdata}\), there exists an exact
decoder
\begin{equation}
    D_{T,\Pdata}:
    \Mmap^{\Pdata}(\mathcal C)\to\mathcal Y
\end{equation}
such that
\begin{equation}
    T_{\Pdata}|_{\mathcal C}
    =
    D_{T,\Pdata}\circ
    \Mmap^{\Pdata}|_{\mathcal C}
\end{equation}
if and only if \(T_{\Pdata}\) is constant on every set
\begin{equation}
    \mathcal C\cap\Fib_m^{\Pdata}.
\end{equation}
\end{observation}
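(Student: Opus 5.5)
This is the standard factorization lemma for a map through a surjection onto its image, and I will prove the two directions separately. Throughout I write \(M=\Mmap^{\Pdata}\) and \(T=T_{\Pdata}\). The key remark is that the sets \(\mathcal C\cap\Fib_m^{\Pdata}\), for \(m\in M(\mathcal C)\), are exactly the fibers of the restricted map \(M|_{\mathcal C}:\mathcal C\to M(\mathcal C)\). They are nonempty precisely when \(m\in M(\mathcal C)\), and they partition \(\mathcal C\). This follows directly from the Observation on moment fibers, by restricting the equivalence relation \(\sim_{\mathcal B,\Pdata}\) to \(\mathcal C\).

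\emph{Necessity.} Suppose \(D_{T,\Pdata}\) exists with \(T|_{\mathcal C}=D_{T,\Pdata}\circ M|_{\mathcal C}\). Take \(\Gamma_1,\Gamma_2\in\mathcal C\cap\Fib_m^{\Pdata}\). Then \(M(\Gamma_1)=M(\Gamma_2)=m\), so
\begin{equation*}
T(\Gamma_1)=D_{T,\Pdata}(m)=T(\Gamma_2).
\end{equation*}
Hence \(T\) is constant on each intersection. Empty intersections satisfy the condition vacuously.

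\emph{Sufficiency.} Assume \(T\) is constant on each \(\mathcal C\cap\Fib_m^{\Pdata}\). For \(m\in M(\mathcal C)\) the intersection is nonempty, so I will define \(D_{T,\Pdata}(m)\) to be the common value of \(T\) on it. Explicitly, choose any \(\Gamma\in\mathcal C\) with \(M(\Gamma)=m\) and set \(D_{T,\Pdata}(m)=T(\Gamma)\). The constancy hypothesis is exactly what makes this independent of the chosen \(\Gamma\), so \(D_{T,\Pdata}\) is single-valued. It is defined on all of \(M(\mathcal C)\) because every element of the image has a preimage in \(\mathcal C\). The identity \(T|_{\mathcal C}=D_{T,\Pdata}\circ M|_{\mathcal C}\) then holds by construction, since each \(\Gamma\in\mathcal C\) lies in the fiber over \(M(\Gamma)\).

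The only delicate point is the well-definedness check in the sufficiency direction, together with taking the domain to be \(M(\mathcal C)\) rather than all of \(\Mspace^{\Pdata}\). On points of \(\Mspace^{\Pdata}\setminus M(\mathcal C)\) no value is forced, so any extension would be arbitrary. Choosing representatives uses at most the axiom of choice. This can be avoided by defining \(D_{T,\Pdata}\) as a set of pairs,
\begin{equation*}
D_{T,\Pdata}=\{(M(\Gamma),T(\Gamma)):\Gamma\in\mathcal C\},
\end{equation*}
and checking that the constancy hypothesis makes this relation a function. No structure of \(\Pdata\), of the probe space, or of the trace pairing is needed; the argument is purely set-theoretic.
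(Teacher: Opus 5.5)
Your proof is correct and follows the same route as the paper. For necessity you evaluate the factorization on two states in the same intersection. For sufficiency you define \(D_{T,\Pdata}(m)\) as the common value of \(T_{\Pdata}\) on \(\mathcal C\cap\Fib_m^{\Pdata}\), independent of the representative. Your remarks on restricting the domain to \(\Mmap^{\Pdata}(\mathcal C)\) and on the choice-free graph construction are careful elaborations of the paper's two-line argument, not a different approach.
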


\begin{proof}
If the factorization holds, states with the same encoded moment have the same
task output.  Conversely, fiber constancy defines
\(D_{T,\Pdata}(m)=T_{\Pdata}(\Gamma)\) independently of the representative
\(\Gamma\in\mathcal C\cap\Fib_m^{\Pdata}\).
\end{proof}

\begin{example}[A finite-dimensional fiber obstruction]
Consider a two-level system and retain only the moment
\begin{equation}
    M_z(\Gamma)=\Trf(\Gamma\sigma_z).
\end{equation}
The pure states \(\Gamma_{+}=|{+x}\rangle\langle{+x}|\) and
\(\Gamma_{-}=|{-x}\rangle\langle{-x}|\) lie in the same fiber because
\(M_z(\Gamma_{+})=M_z(\Gamma_{-})=0\).  For the task
\(T_x(\Gamma)=\Trf(\Gamma\sigma_x)\), however,
\begin{equation}
    T_x(\Gamma_{+})=1,
    \qquad
    T_x(\Gamma_{-})=-1.
\end{equation}
Hence no single-valued decoder \(D_x(M_z)\) can reproduce \(T_x\) on a state
class containing both states.  One may instead restrict the admissible class,
select a representative by an additional principle, accept a set-valued or
approximate output, or refine the encoder, for example by retaining both
\(\Trf(\Gamma\sigma_z)\) and \(\Trf(\Gamma\sigma_x)\).
\end{example}

If this condition fails on the original state class, a function of \(m\) alone
cannot reproduce the task there.  Equilibrium reduced theories then add
structure that restricts or summarizes the relevant part of the fiber.

\begin{definition}
A state-level reconstruction correspondence is a set-valued map
\begin{equation}
    \Rec_{\Pdata}:
    \Mspace^{\Pdata}
    \rightrightarrows
    \Sstate_{\Pdata}
\end{equation}
such that
\begin{equation}
    \Rec_{\Pdata}(m)\subseteq\Fib_m^{\Pdata}.
\end{equation}
\end{definition}

The term decoder will be used for task-directed maps built from this common
structure.  A state-level decoder selects or weights representatives in a
fiber; unless the original task is fiber-constant, such a selection defines a
restricted or convention-dependent output rather than recovering every task
value on the full fiber.  A scalar decoder assigns an energy or free energy, as
in \(F[m]\) or \(A_\beta[m]\).  A task-level decoder evaluates an
observable, response, or correlation function on the selected class.  A
dual-level decoder produces a field paired with the retained variable.  If a scalar decoder
\(\mathcal D[m]\) is differentiable, it may generate
\begin{equation}
    j_{\mathcal D}[m]
    =
    \frac{\delta\mathcal D[m]}{\delta m}
    \in\mathcal M^\ast.
\end{equation}
External sources and interaction-generated effective fields therefore live at
the same paired representation level, although the former are prescribed and
the latter are generated by a functional or closure.  For a Green's-function
encoder, \(G\) is the retained primal variable, while bilocal sources,
self-energies, Weiss fields, and hybridization kernels are dual or control
objects associated with decoding and self-consistency.  Approximate closures
and embedding conditions are practical decoder components.

Suppose that the chosen moment is generated by an operator-valued quantity
\(\hat m\), so that, in the corresponding pairing convention,
\begin{equation}
    m=\TrF(\Gamma\hat m).
\end{equation}
For a zero-temperature source-coupled problem,
\begin{equation}
    H[j]=H+\langle j,\hat m\rangle,
    \qquad m=M(\Gamma),
\end{equation}
where \(H\) is the source-free many-body Hamiltonian, grouping states by their
moments gives
\begin{equation}
\begin{aligned}
    E[j]
    &=
    \inf_{m\in M(\Sstate_{\Pdata})}
    \left\{
    F[m]+\langle j,m\rangle
    \right\},\\
    F[m]
    &=
    \inf_{\Gamma\in M^{-1}(m)}
    \TrF(\Gamma H).
\end{aligned}
\end{equation}
The scalar functional \(F[m]\) is the value function associated with the
constrained search.  When the infimum is attained, the corresponding
state-level reconstruction correspondence is
\begin{equation}
    \Rec_{E,\Pdata}(m)
    =
    \arg\min_{\Gamma\in M^{-1}(m)}
    \TrF(\Gamma H).
\end{equation}
For the density moment this reduces to the Levy--Lieb constrained-search
decoder; for nonlocal one-body moments it gives the corresponding
reduced-density-matrix functional decoder.\cite{Levy1979,Lieb1983,Gilbert1975,ShengExactDFT}

At finite temperature, the same encoder--fiber structure is used, but the
selection rule changes.  In a canonical convention restricted to a fixed-particle-number sector,
\begin{equation}
    A_\beta[m]
    =
    \inf_{\Gamma\in M^{-1}(m)}
    \left\{
    \TrF(\Gamma H)
    +
    \beta^{-1}\TrF(\Gamma\log\Gamma)
    \right\}.
\end{equation}
Thus zero-temperature and finite-temperature theories can use the same selected
probe family, while their equilibrium specification and variational decoder
differ.\cite{Mermin1965}

For a static observable \(\hat O\), the induced task-level decoder is
\begin{equation}
    \Oset[m]
    =
    \left\{
    \TrF(\Gamma\hat O):
    \Gamma\in\Rec_{\Pdata}(m)
    \right\}.
\end{equation}
If this set is a singleton, \(\hat O\) is represented by an ordinary
single-valued functional at \(m\); otherwise, it is naturally set-valued.  A
task is represented by \(m\) alone when the desired output is constant on the
corresponding fiber.  Otherwise, additional structure may select a
representative, restrict the state class, return a set of admissible values, or
provide an approximation, but it does not make the original nonconstant task a
single-valued function of \(m\) on the unrestricted fiber.

Imaginary-time correlation functions require no separate representation
principle.  They are either task outputs of the complete readout
\(\Fullread(\Gamma)\), or retained variables obtained by restricting that
readout to a time-ordered probe family.  For example,
\begin{equation}
    C_{AB}(\tau)
    =
    \Fullread(\Gamma_\beta)
    \!\left(T_\tau A(\tau)B(0)\right).
\end{equation}
Choosing the probes
\(T_\tau c_i(\tau)c_j^\dagger(0)\) gives a Matsubara Green's-function encoder.
The Baym--Kadanoff, Luttinger--Ward, and related effective-action constructions
then provide scalar and dual decoder structures for that encoder, rather than
introducing a new primitive kind of representation.
\cite{LuttingerWard1960,BaymKadanoff1961,Baym1962,Hedin1965,ShengRPAHessian}

Real-time nonequilibrium theory is different.  Its natural objects are
histories or fields on a real-time path, usually organized by a
Schwinger--Keldysh contour.  Real-time TDDFT provides a prominent example in
which the density history, together with the initial state, is treated as the
primary reduced variable.\cite{RungeGross1984}  The decoder is then causal,
history-dependent, and contour-level rather than a constrained-search or
equilibrium free-energy correspondence.  Such theories are outside the scope
of the present equilibrium formalism.\cite{Schwinger1961,Keldysh1965}

\section{Information redistribution and refinement}

The identity encoder and a reduced encoder differ by the size of their fibers.
For the full-state encoder,
\begin{equation}
    E_{\Pdata}^{\rm full}(\Gamma)=\Gamma,
\end{equation}
every fiber is a singleton and task decoding is direct.  A reduced-moment
encoder
\begin{equation}
    m=\Mmap^{\Pdata}(\Gamma)
\end{equation}
has generally larger fibers.  Tasks constant on those fibers remain direct
functionals of \(m\).  Other tasks become well defined only after the relevant
state class has been narrowed or summarized by a reconstruction
correspondence, functional, kernel, or closure.  The operational reduced theory
is therefore described by the encoder together with its decoder structure,
rather than by the value \(m\) in isolation.

This suggests the heuristic bookkeeping relation
\begin{equation}
\begin{aligned}
    \text{many-body complexity}
    \approx{}&
    \text{explicit variable complexity}  \\
    &+
    \text{decoder complexity}.
\end{aligned}
\end{equation}
This is not a Shannon entropy identity or a quantitative complexity theorem.
It states only that distinctions removed from the explicit coordinate must be
handled by the decoder, by a restriction of the admissible state class, or by
an approximation if the reduced theory is to address the intended task class.

Reduced descriptions should therefore not be ranked simply as complete or
incomplete.  They form a hierarchy of explicitness, compression, fiber size,
and decoder burden.  Sufficiency is always relative to a specified task class
and a specified decoder.  A full state is explicitly complete for the chosen
observable algebra.  A reduced moment is already complete for tasks that factor
through its encoder; for other tasks, completeness requires the relevant
decoder.

This can be formalized by comparing two encoders
\begin{equation}
    M_1:\Sstate\to\mathcal M_1,
    \qquad
    M_2:\Sstate\to\mathcal M_2.
\end{equation}

\begin{definition}
We say that \(M_2\) refines \(M_1\) if there exists a map
\begin{equation}
    \pi:\mathcal M_2\to\mathcal M_1
\end{equation}
such that
\begin{equation}
    M_1=\pi\circ M_2.
\end{equation}
\end{definition}

\begin{observation}[Refined moments have smaller fibers]
Suppose \(M_2\) refines \(M_1\).  If \(m_2=M_2(\Gamma)\) and
\(m_1=\pi(m_2)\), then
\begin{equation}
    M_2^{-1}(m_2)
    \subseteq
    M_1^{-1}(m_1).
\end{equation}
\end{observation}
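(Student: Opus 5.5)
The plan is a direct element-chase using only the definition of refinement, \(M_1=\pi\circ M_2\). I would fix \(\Gamma\), set \(m_2=M_2(\Gamma)\) and \(m_1=\pi(m_2)=M_1(\Gamma)\), and take an arbitrary \(\Gamma'\in M_2^{-1}(m_2)\), so that \(M_2(\Gamma')=m_2\). Applying \(\pi\) and using the factorization gives
\begin{equation}
    M_1(\Gamma')=\pi\bigl(M_2(\Gamma')\bigr)=\pi(m_2)=m_1 .
\end{equation}
Hence \(\Gamma'\in M_1^{-1}(m_1)\). Since \(\Gamma'\) was arbitrary, this yields the inclusion \(M_2^{-1}(m_2)\subseteq M_1^{-1}(m_1)\).

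I would then add two remarks. First, the existence of \(\Gamma\) is only needed to ensure that \(m_1\) is the image of a representable \(m_2\). The inclusion actually holds for every \(m_2\in\mathcal M_2\), even when the fiber is empty, because the argument never uses \(\Gamma\) itself.

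Second, the statement can be recast in the language of the exact-decoding observation. The encoder \(M_1\) is itself a task \(T=M_1\). By hypothesis it factors through \(M_2\) via the decoder \(\pi\), so by that observation \(M_1\) is constant on every fiber of \(M_2\). This is exactly the claimed inclusion. Equivalently, the equivalence relation \(\sim_{M_2}\) is finer than \(\sim_{M_1}\).

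There is no genuine obstacle here. The only point requiring care is that \(\pi\) must be a single-valued map defined on all of \(\mathcal M_2\), or at least on \(M_2(\Sstate)\). This is what makes \(\pi(m_2)\) well defined and lets the composition identity be applied pointwise.
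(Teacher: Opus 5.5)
Your element-chase is correct and is essentially the paper's proof: take \(\Gamma'\in M_2^{-1}(m_2)\), apply \(\pi\) using \(M_1=\pi\circ M_2\), and conclude \(M_1(\Gamma')=\pi(m_2)=m_1\). Your side remarks are also valid, namely that the inclusion holds for every \(m_2\) and that it can be read off from the exact-decoding observation with task \(T=M_1\).
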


\begin{proof}
If \(\Gamma'\in M_2^{-1}(m_2)\), then \(M_2(\Gamma')=m_2\).  Applying \(\pi\)
gives \(M_1(\Gamma')=\pi(m_2)=m_1\), hence
\(\Gamma'\in M_1^{-1}(m_1)\).
\end{proof}

\begin{observation}[Exact decodability is preserved by refinement]
Suppose \(M_2\) refines \(M_1\), with \(M_1=\pi\circ M_2\).  If a task \(T\)
has an exact decoder \(D_1\) through \(M_1\) on a state class \(\mathcal C\), then
it also has an exact decoder through \(M_2\), namely
\begin{equation}
    D_2=D_1\circ\pi,
    \qquad
    T|_{\mathcal C}=D_2\circ M_2|_{\mathcal C}.
\end{equation}
The converse need not hold: a refined encoder may separate states that remain
indistinguishable under the coarser encoder.
\end{observation}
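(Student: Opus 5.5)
The proof is a direct composition argument, followed by an explicit counterexample for the converse.

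For the forward direction, assume \(D_1:M_1(\mathcal C)\to\mathcal Y\) is exact, so \(T|_{\mathcal C}=D_1\circ M_1|_{\mathcal C}\). The first step is to check that \(D_2=D_1\circ\pi\) is well defined on \(M_2(\mathcal C)\). If \(m_2=M_2(\Gamma)\) with \(\Gamma\in\mathcal C\), then the refinement identity \(M_1=\pi\circ M_2\) gives \(\pi(m_2)=M_1(\Gamma)\in M_1(\mathcal C)\). This is exactly the domain of \(D_1\), so \(D_2\) may be taken as a map \(M_2(\mathcal C)\to\mathcal Y\). The second step is a one-line computation for every \(\Gamma\in\mathcal C\):
\begin{equation}
    D_2(M_2(\Gamma))=D_1(\pi(M_2(\Gamma)))=D_1(M_1(\Gamma))=T(\Gamma).
\end{equation}
This is the claimed factorization \(T|_{\mathcal C}=D_2\circ M_2|_{\mathcal C}\).

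As a cross-check, one can also argue through the earlier criterion for exact task decoding. Exact decodability through \(M_1\) means \(T\) is constant on each set \(\mathcal C\cap M_1^{-1}(m_1)\). The observation on refined moments gives \(M_2^{-1}(m_2)\subseteq M_1^{-1}(m_1)\) with \(m_1=\pi(m_2)\). Hence \(T\) is also constant on each \(\mathcal C\cap M_2^{-1}(m_2)\), and the same criterion yields a decoder through \(M_2\).

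For the converse, the plan is to exhibit a refinement together with a task that decodes through \(M_2\) but not through \(M_1\). I will reuse the two-level example.
\begin{itemize}
    \item Take \(M_1=M_z\) and \(M_2(\Gamma)=(\Trf(\Gamma\sigma_z),\Trf(\Gamma\sigma_x))\), with \(\pi\) the projection onto the first component.
    \item Take the task \(T_x(\Gamma)=\Trf(\Gamma\sigma_x)\) on \(\mathcal C=\{\Gamma_+,\Gamma_-\}\).
    \item \(T_x\) is decoded exactly through \(M_2\) by projection onto the second component.
    \item The example shows \(T_x\) is nonconstant on the \(M_1\)-fiber \(M_z^{-1}(0)\cap\mathcal C\), so no decoder through \(M_1\) exists.
\end{itemize}

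There is no real obstacle here. The only point requiring care is the domain bookkeeping, namely checking that \(\pi(M_2(\mathcal C))\subseteq M_1(\mathcal C)\) so that \(D_1\circ\pi\) is defined on all of \(M_2(\mathcal C)\).
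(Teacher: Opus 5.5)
Your proposal is correct and follows the paper's approach: the paper gives only the formula $D_2=D_1\circ\pi$, whose verification is your computation $D_1\circ\pi\circ M_2=D_1\circ M_1=T$ on $\mathcal C$. For the converse, your counterexample is the paper's own two-level example, in which retaining both $\Trf(\Gamma\sigma_z)$ and $\Trf(\Gamma\sigma_x)$ refines $M_z$ and separates $\Gamma_\pm$; your domain check $\pi(M_2(\mathcal C))\subseteq M_1(\mathcal C)$ and the fiber-criterion cross-check are useful additions, not a different route.
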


Thus richer moments distinguish more states.  They store more information
explicitly, so less must be reconstructed by the decoder.  The price is a
larger moment variable and, in many cases, a harder representability problem.

\section{Full-state methods, reduced-moment methods, and quantum embedding}

The preceding formalism distinguishes two broad representation-level patterns.
In many common formulations, full-state methods use a wavefunction, density
operator, path, walker population, tensor network, or another state-generating
object as the primary encoded variable; reduced readouts are computed
afterward.  Configuration interaction, multiconfigurational methods,
coupled-cluster theory, tensor-network states, neural quantum states, and
quantum Monte Carlo have this representation-level character in their standard
formulations, even though their optimization and observable-evaluation rules
differ.\cite{Coester1958,Cizek1966,White1992,Schollwoeck2011,Foulkes2001,CarleoTroyer2017}

Reduced-moment methods take a non-injective equilibrium readout as primary.
Their decoders may be functionals, constrained searches, effective actions,
representability conditions, kernels, self-energies, or closures.
Density-functional theory, finite-temperature DFT,
reduced-density-matrix functional theories, pair-density functional theories,
Matsubara Green's-function functionals, and equilibrium response theories fit
this class when the corresponding reduced encoder is primary.
\cite{HohenbergKohn1964,KohnSham1965,Mermin1965,Levy1979,Lieb1983,Gilbert1975,Coleman1963,LuttingerWard1960,BaymKadanoff1961,Baym1962,Hedin1965,ShengExactDFT,ShengInverseKS}
The distinction concerns the primary representation, not every intermediate
object used by an algorithm.

Quantum embedding couples encoded descriptions through a reduced interface.
The interface is not a full state, but an overlapping reduced quantity on which the global
and local descriptions are matched.

\begin{definition}[Quantum embedding through a reduced interface]
A quantum embedding construction consists of a global description, a local
description, an overlapping reduced-variable space \(\MA\), optionally its
paired dual space \(\MAstar\), and a matching, consistency, or replacement
condition.  The primal interface variables satisfy
\begin{equation}
    \mg,\ml\in\MA,
\end{equation}
while conjugate or effective interface fields satisfy
\begin{equation}
    \jg,\jl\in\MAstar.
\end{equation}
The coupling condition may therefore have the general form
\begin{equation}
    \mathcal C(\mg,\ml,\jg,\jl)=0,
\end{equation}
with unused arguments omitted in schemes that match only primal variables or
only update selected dual fields.  Examples of primal interface variables
include densities, fragment one-body density matrices, and local Matsubara
Green's functions.  Examples of dual interface fields include embedding or
correlation potentials, hybridization kernels, and self-energies.  The global
and local interface quantities may be produced by full-state solvers,
reduced-variable functionals, Green's-function approximations, stochastic
solvers, or hybrid schemes.
\end{definition}

\begin{table*}
\caption{Representative embedding schemes as matching through reduced variables and associated dual fields.}
\begin{ruledtabular}
\begin{tabular}{llll}
Framework & Global description & Local description & Interface variable / dual field / condition \\
\hline
WFT-in-DFT & DFT density functional & WFT/DMRG/CC active state & density; embedding potential \\
DMET & lattice mean-field/bath & impurity solver & fragment 1RDM; correlation potential \\
DMFT & lattice Green's function & impurity model & local \(G\); self-energy and Weiss/hybridization field \\
\(GW\)+DMFT & nonlocal \(GW\) sector & local DMFT sector & local \(G\); local self-energy replacement and double counting \\
\end{tabular}
\end{ruledtabular}
\label{tab:embedding}
\end{table*}

In the taxonomy adopted here, embedding is not treated as a third primitive
storage strategy.  It is a matching principle between global and local
descriptions through a reduced interface.  Density-based embedding matches
density components; DMET matches fragment one-body density matrices and
adjusts a correlation potential; DMFT imposes consistency on the local Matsubara Green's function while updating the
self-energy and Weiss or hybridization field on the dual side; and \(GW\)+DMFT
combines local Green's-function consistency with self-energy replacement and
double-counting conditions.\cite{Georges1996,KniziaChan2012,KniziaChan2013,Biermann2003,WassermanPavanello2020}
The accuracy of an embedding construction is therefore controlled not only by
the global and local solvers, but also by the choice of interface variable,
the associated dual fields, and the approximation used in the matching or
replacement condition.

\section{Example: wavefunction-in-DFT embedding as reduced-moment matching}

Wavefunction-in-DFT embedding gives a representative example of the abstract
structure developed above.\cite{WesolowskiWarshel1993,JacobNeugebauer2014,Manby2012,Goodpaster2012,Goodpaster2014,Libisch2014,Lee2019}
The environment is described at the density-functional level, while an active
subsystem is treated by a state-level many-body solver.  The two descriptions
are coupled through a shared reduced moment, usually the density, and, in
potential-based formulations, through an embedding potential conjugate to that
density.

Let \(A\) denote the embedded active subsystem and \(B\) the environment
subsystem.  In a density-based formulation, the total density is decomposed as
\begin{equation}
    \rho=\rho_A+\rho_B.
\end{equation}
Here \(A\) and \(B\) denote chosen subsystems, such as molecular fragments,
atomic groups, solute--environment components, or other chemically selected
degrees of freedom represented at the density level.  They need not be literal
real-space regions.

The environment is represented by a density \(\rho_B\) and a density-functional
decoder.  The active subsystem is represented by a many-body state
\(\Gamma_A\in\Sstate_A\), from which the active density is obtained by the
local density encoder
\begin{equation}
    \rho_A=M_\rho^A(\Gamma_A).
\end{equation}
Thus the active theory may be a wavefunction or density-matrix method, but its
coupling to the environment is mediated by a reduced density.

The precise energy partition depends on the chosen embedding formalism and
functional.  The point here is the representation structure rather than a
unique energy decomposition.  A schematic embedded energy has the form
\begin{equation}
    E_{\mathrm{emb}}[\Gamma_A,\rho_B]
    =
    E_A[\Gamma_A]
    +
    E_B[\rho_B]
    +
    E_{\mathrm{coup}}[M_\rho^A(\Gamma_A),\rho_B].
\end{equation}
The first term is a state-level active-subsystem contribution.  The second and
third terms are reduced-density-level decoders for the environment and the
active--environment coupling.  In this form, the local many-body solver and the
density-functional environment communicate only through the reduced density
\(M_\rho^A(\Gamma_A)\).

When the coupling functional is differentiable, the density interface can be
represented by an embedding potential
\begin{equation}
    v_{\mathrm{emb}}(\mathbf r)
    =
    \frac{\delta E_{\mathrm{coup}}[\rho_A,\rho_B]}
    {\delta \rho_A(\mathbf r)}.
\end{equation}
This potential is a one-body field, but the active problem need not be a
one-body problem.  Related density-reproduction and potential-as-dual-variable
structures also appear in inverse Kohn--Sham theory.\cite{ShengInverseKS}
The embedded active Hamiltonian has the schematic density-dependent form
\begin{equation}
    H_A^{\mathrm{emb}}[\rho_A,\rho_B]
    =
    H_A^{\mathrm{WFT}}
    +
    \int v_{\mathrm{emb}}[\rho_A,\rho_B](\mathbf r)
    \hat n_A(\mathbf r)\dr,
\end{equation}
where \(H_A^{\mathrm{WFT}}\) contains the active-space one- and two-body terms.
At self-consistency, the embedding potential is evaluated at
\((\rho_A^\star,\rho_B)\).  For a variational active solver, the active state
and density satisfy the coupled conditions
\begin{align}
    \Gamma_A^\star
    &\in
    \arg\min_{\Gamma_A\in\Sstate_A}
    \TrF\!\left[
    \Gamma_A H_A^{\mathrm{emb}}[\rho_A^\star,\rho_B]
    \right],\\
    \rho_A^\star
    &=M_\rho^A(\Gamma_A^\star).
\end{align}
For a projective solver, such as conventional coupled cluster, the variational
condition is replaced by the corresponding embedded amplitude or projective
equations, with the same density-level self-consistency.

This example illustrates the reduced-moment matching view of embedding.  The
active subsystem may be solved by a full-state method such as configuration
interaction, coupled cluster, CASSCF, DMRG, or another correlated solver.  The
environment is represented by a density-functional description.  The interface
between the two is not a full state of the total system, but the reduced
density and its conjugate embedding potential.  Other embedding frameworks use
different primal interface variables and dual fields: DMET commonly matches
fragment one-body density matrices while adjusting a correlation potential,
whereas DMFT matches the local Matsubara Green's function and updates the
self-energy or hybridization field on the dual side.

\section{Conclusion}

A single encoder--fiber--decoder principle organizes the representation
structure of equilibrium quantum many-body theory.  Relative to a fixed
equilibrium specification \(\Pdata\), an encoder maps an admissible state to the
variable retained by a theory.  The identity encoder gives a full-state
representation.  A non-injective encoder gives a reduced representation whose
value labels a fiber of compatible states.  A task has an exact decoder from
that value precisely when it is constant on the relevant fiber intersections.
Otherwise, an operational equilibrium theory must add a reconstruction,
variational, functional, effective-action, or closure structure that narrows,
selects, summarizes, or approximates the fiber for the task at hand.

Static moments and imaginary-time variables are instances of the same
construction.  A state defines a complete \(\Pdata\)-relative equilibrium
readout functional, and densities, reduced density matrices, Matsubara Green's
functions, and response functions arise by restricting it to different probe
families.  Their associated potentials, self-energies, hybridization kernels,
and other effective fields occupy paired dual or control levels generated by
the decoder structure.

Quantum embedding also follows the same principle.  Global and local
descriptions are separately encoded onto an overlapping interface, and the
embedding construction imposes consistency, replacement, or feedback between
the corresponding primal variables and effective fields.  Within the present
taxonomy, embedding is therefore a coupling architecture among representations
rather than a separate primitive representation type.

The resulting hierarchy is not primarily one of logical completeness.  It is a
hierarchy of fiber size, explicitness, compression, and decoder burden, with
exact decodability preserved when an encoder is refined.
Many-body information may be retained explicitly in a state-level object,
represented through a reduced equilibrium readout plus an exact decoder, or
distributed across coupled descriptions through interface matching.

\bibliography{ref}

\end{document}